\pgfplotsset{compat=1.17} 
\tikzset{myblock/.style={rectangle, draw, thin, minimum width=0.6cm, minimum height=0.6cm},font=\footnotesize,align=center}%
\tikzset{mywideblock/.style={rectangle, draw, thin, minimum width=1cm, minimum height=0.6cm},font=\footnotesize,align=center}%
\newcommand{\myline}[2]{
\path(#1.east) --(#2.west)  coordinate[pos=0.4](mid);
\draw[-latex] (#1.east) -| (mid) |- (#2.west);
}
\def\BibTeX{{\rm B\kern-.05em{\sc i\kern-.025em b}\kern-.08em
    T\kern-.1667em\lower.7ex\hbox{E}\kern-.125emX}}
\pgfplotsset{compat=1.17} 
\DeclareMathOperator*{\mathboxplus}{\boxplus}
\DeclareMathOperator*{\mAut}{\mathrm{GAut}(\mathcal{C})}
\DeclareMathOperator*{\mEnd}{\mathrm{End}(\mathcal{C})}
\DeclareMathOperator*{\mMatrixEnd}{\mathcal{T}_E(\mathcal{C})}
\definecolor{kit-green100}{rgb}{0,.59,.51}
\definecolor{kit-green100}{rgb}{0,.59,.51}
\definecolor{kit-green70}{rgb}{.3,.71,.65}
\definecolor{kit-green50}{rgb}{.50,.79,.75}
\definecolor{kit-green30}{rgb}{.69,.87,.85}
\definecolor{kit-green15}{rgb}{.85,.93,.93}
\definecolor{KITgreen}{rgb}{0,.59,.51}
\definecolor{KITpalegreen}{RGB}{130,190,60}
\colorlet{kit-maigreen100}{KITpalegreen}
\colorlet{kit-maigreen70}{KITpalegreen!70}
\colorlet{kit-maigreen50}{KITpalegreen!50}
\colorlet{kit-maigreen30}{KITpalegreen!30}
\colorlet{kit-maigreen15}{KITpalegreen!15}
\definecolor{KITblue}{rgb}{.27,.39,.66}
\definecolor{kit-blue100}{rgb}{.27,.39,.67}
\definecolor{kit-blue70}{rgb}{.49,.57,.76}
\definecolor{kit-blue50}{rgb}{.64,.69,.83}
\definecolor{kit-blue30}{rgb}{.78,.82,.9}
\definecolor{kit-blue15}{rgb}{.89,.91,.95}
\definecolor{KITyellow}{rgb}{.98,.89,0}
\definecolor{kit-yellow100}{cmyk}{0,.05,1,0}
\definecolor{kit-yellow70}{cmyk}{0,.035,.7,0}
\definecolor{kit-yellow50}{cmyk}{0,.025,.5,0}
\definecolor{kit-yellow30}{cmyk}{0,.015,.3,0}
\definecolor{kit-yellow15}{cmyk}{0,.0075,.15,0}
\definecolor{KITorange}{rgb}{.87,.60,.10}
\definecolor{kit-orange100}{cmyk}{0,.45,1,0}
\definecolor{kit-orange70}{cmyk}{0,.315,.7,0}
\definecolor{kit-orange50}{cmyk}{0,.225,.5,0}
\definecolor{kit-orange30}{cmyk}{0,.135,.3,0}
\definecolor{kit-orange15}{cmyk}{0,.0675,.15,0}
\definecolor{KITred}{rgb}{.63,.13,.13}
\definecolor{kit-red100}{cmyk}{.25,1,1,0}
\definecolor{kit-red70}{cmyk}{.175,.7,.7,0}
\definecolor{kit-red50}{cmyk}{.125,.5,.5,0}
\definecolor{kit-red30}{cmyk}{.075,.3,.3,0}
\definecolor{kit-red15}{cmyk}{.0375,.15,.15,0}
\definecolor{KITpurple}{RGB}{160,0,120}
\colorlet{kit-purple100}{KITpurple}
\colorlet{kit-purple70}{KITpurple!70}
\colorlet{kit-purple50}{KITpurple!50}
\colorlet{kit-purple30}{KITpurple!30}
\colorlet{kit-purple15}{KITpurple!15}
\definecolor{KITcyanblue}{RGB}{80,170,230}
\colorlet{kit-cyanblue100}{KITcyanblue}
\colorlet{kit-cyanblue70}{KITcyanblue!70}
\colorlet{kit-cyanblue50}{KITcyanblue!50}
\colorlet{kit-cyanblue30}{KITcyanblue!30}
\colorlet{kit-cyanblue15}{KITcyanblue!15}
\newcommand\blfootnote[1]{%
    \begingroup
    \renewcommand\thefootnote{}\footnote{#1}%
    \addtocounter{footnote}{-1}%
    \endgroup
}
\newtheorem{theorem}{Theorem}
\newtheorem{proposition}{Proposition}
\begin{document}
\title{%
Endomorphisms of Linear Block Codes\\
}

\author{\IEEEauthorblockN{Jonathan Mandelbaum, Sisi Miao, Holger Jäkel, and Laurent Schmalen}
\IEEEauthorblockA{Communications Engineering Lab, Karlsruhe Institute of Technology (KIT), 76131 Karlsruhe, Germany\\
\texttt{jonathan.mandelbaum@kit.edu}}
}

\maketitle

\begin{abstract}
The automorphism groups of various linear codes are extensively studied yielding insights into the respective code structure. 
This knowledge is used in, e.g., theoretical analysis and in improving decoding performance,
motivating the analyses of endomorphisms of linear codes.
In this work, we discuss the structure of the set of transformation matrices of code endomorphisms, defined as a generalization of code automorphisms, and provide an explicit
construction of a bijective mapping between the image of an endomorphism and its canonical quotient space. Furthermore, we introduce a one-to-one mapping between the set of transformation matrices of endomorphisms and a larger linear block code enabling the use of well-known algorithms for the search for suitable endomorphisms.
Additionally, we propose an approach to obtain unknown code endomorphisms based on automorphisms of the code.
Furthermore, we consider ensemble decoding as a possible use case for endomorphisms by introducing endomorphism ensemble decoding. Interestingly, EED can improve decoding performance when other ensemble decoding schemes are not applicable.%
\end{abstract}

\section{Introduction}

It is well-known that the automorphism group of a code provides valuable knowledge of the code's structure \cite{MacWilliamsSloane}. 
This knowledge is successfully used in different areas of coding theory, e.g., in the proof that Reed--Muller (RM) codes achieve the capacity on the erasure channel \cite{7862912}, in enumerating minimum weight codewords of polar codes \cite{algebraicpropertiesofpolarcodes}, and in novel ensemble decoding schemes improving performance in the short-block length regime \cite{AED_RMcodes,stuttgart_ldpc_aed, MBBP1}.
Such applications motivate the study of the automorphism group of various classical codes, e.g., Golay, Bose--Chaudhuri--Hocquenghem (BCH), and RM codes as well as modern codes like
polar codes\cite{MacWilliamsSloane,DBLP:journals/dcc/BergerC99,algebraicpropertiesofpolarcodes,Aut_PolarCodes_Geiselhart}. \blfootnote{This work has received funding from the 
German Federal Ministry of Education and Research (BMBF) within the project Open6GHub (grant agreement 16KISK010) and the European Research Council (ERC) under the European Union’s Horizon 2020 research and innovation programme (grant agreement No. 101001899).}

Typically, in coding theory, the automorphism group of a code is defined as the set of \emph{scaled permutations} mapping a code onto itself \cite{MacWilliamsSloane}. From a classical coding perspective, this definition is coherent, as the so-defined automorphisms preserve properties of the respective codeword, e.g., the Hamming weight. Yet, in linear algebra, the concept of automorphisms of a vector space is more general including all linear, bijective self-mappings. In \cite{mandelbaum2023generalized}, the present authors have suggested the use of generalized automorphisms for decoding.

In this work, we extend this perspective by considering the set of endomorphisms of linear block codes, i.e., we drop the necessity of bijectivity. We use the methodology described in \cite{mandelbaum2023generalized} to analyze some structural properties of the transformation matrices of code endomorphisms.
Then, we provide an explicit construction of the transformation matrix corresponding to the bijective mapping from the image of an endomorphism onto its canonical quotient space. 
Furthermore, we prove the existence of a one-to-one mapping between the set of endomorphism transformation matrices of a code and a higher dimensional linear block code and 
describe an approach to construct endomorphisms of a code given its automorphism group.
To the best of our knowledge, the usage of endomorphisms was not yet discussed within the area of coding theory.

Afterwards, we propose ensemble decoding based on endomorphisms and call the scheme endomorphism ensemble decoding (EED).
EED provides a flexible framework generalizing ensemble decoding schemes, e.g., multiple basis belief propagation (MBBP) \cite{MBBP1,MBBP2,MBBP3_withLeaking}, automorphism ensemble decoding (AED) \cite{AED_RMcodes,stuttgart_ldpc_aed,geiselhart2023ratecompatible}, and generalized AED (GAED) \cite{mandelbaum2023generalized}.
We show that there exist scenarios in which AED does not yield improvement compared to conventional decoding ~\cite[Corollarly 2.1]{AED_RMcodes} while endomorphisms can improve the decoding performance.

\section{Preliminaries}
In this work, we consider linear block codes over a finite field $\mathbb{F}_q$.
A linear block code $\mathcal{C}(n,k)$ is a $k$-dimensional subspace of the vector space $\mathbb{F}_q^n$, where the parameters $n\in \mathbb{N}$ and $k\in \mathbb{N}$ represent the block length and information length, respectively.
Contrary to typical coding conventions, vectors are column vectors as common in linear algebra.
 For simplicity, the parameters $(n,k)$ are omitted when obvious from the context.

A major advantage of linear block codes is that they can be efficiently constructed as the linear span of the columns of a generator matrix. 
Equivalently, they can be described as the null space of their parity\--check matrix (PCM) $\bm{H}\in \mathbb{F}_q^{(n-k)\times n}$, which we assume to be of full rank \cite{MacWilliamsSloane}, i.e,
$$\mathcal{C}\left(n,k\right)=\left\{\bm{x}\in \mathbb{F}_q^n:\bm{H} \bm{x} = \bm{0}\right\}=\mathrm{Null}(\bm{H}).$$

The automorphism group of a code is defined as the set of \emph{scaled permutations} that fix the code, i.e.,
\begin{equation*}
\mathrm{Aut}(\mathcal{C}):=  \!  
\left\{\! \pi^{(a)}: \mathcal{C}\!\to\!\mathcal{C}, \bm{x}\!\mapsto\!a\pi(\bm{x})\!:\! \pi\!\in S_n, a\in\mathbb{F}_q\!\setminus\!\{0\}    \right\},
\end{equation*}
where $a{\pi(\bm{x})=\begin{pmatrix}
ax_{\pi(1)},&\cdots &,ax_{\pi(n)}
\end{pmatrix}}^\mathsf{T}$ and $S_n$ denoting the symmetric group\cite{MacWilliamsSloane}.
In \cite{mandelbaum2023generalized}, a different view on the automorphism group of codes has been discussed, inspired by the more general definition of automorphisms used in linear algebra.
Therein, the (generalized) automorphism group of a linear code $\mathcal{C}$ is defined as
\begin{equation*}
\mathrm{GAut}(\mathcal{C}):=    
\left\{ \tau: \mathcal{C}\rightarrow\mathcal{C}: \text{ $\tau$ linear, $\tau$ bijective} \right\}.
\end{equation*}
Note that $\mathrm{Aut}(\mathcal{C})\subseteq \mathrm{GAut}(\mathcal{C})$ \cite{mandelbaum2023generalized} and that 
a linear mapping ${\tau:\mathcal{C}\rightarrow\mathcal{C}}$ can be represented by a transformation matrix ${\bm{T}\in \mathbb{F}_q^{n\times n}}$,
i.e., \begin{equation}\label{eq:linear_mapping}
\bm{x}_\tau:=\tau(\bm{x})=\bm{T}\bm{x}\in \mathcal{C}.
\end{equation}

\section{Endomorphisms of Codes} \label{sec:end_of_code}
In this section, we consider endomorphisms of linear codes.
To this end, note that set of endomorphisms of a vector space $\mathcal{C}$ is defined as all
linear self-mappings \cite{bhattacharya1994basic}, i.e., \vspace{-1mm}
\begin{equation*}
\mEnd:=    
\left\{ \tau: \mathcal{C}\rightarrow\mathcal{C}: \text{ $\tau$ linear} \right\},\vspace{-2mm}
\end{equation*}
dropping the necessity of $\tau$ being bijective.
Thus,
${\mathrm{GAut}(\mathcal{C})\subseteq \mEnd}$, i.e., every generalized automorphism is also an endomorphism.
Furthermore, we define the set of transformation matrices of endomorphisms 
\begin{equation*}
    \mMatrixEnd:= \left\{\bm{T}\in \mathbb{F}_q^{n\times n}:  \bm{x}\in \mathcal{C}\implies \bm{T}\bm{x}\in \mathcal{C}, \forall \bm{x}\in \mathcal{C} \right\}.
\end{equation*}
A mapping ${\tau\in\mEnd\setminus \mathrm{GAut}(\mathcal{C})}$ is called \emph{proper endomorphism}.
Note that a proper endomorphism is no longer injective. 
In accordance with the first isomorphism theorem \cite[3.107]{ladoneright}, we define a coset
\[
[\bm{x}]_{\tau}:=\bm{x}+\mathrm{Null}(\tau)
\]
consisting of all codewords that are mapped onto the same codeword $\bm{x}_\tau\in \mathcal{C}$ by $\tau$.
Then, the set of all cosets forms a vector space called quotient space of $\mathcal{C}$ modulo $\mathrm{Null}(\tau)$ denoted by $\mathcal{C}/\mathrm{Null}(\tau)$ \cite[3.103]{ladoneright}.

By examining the structural properties of transformation matrices in $\mMatrixEnd$, we arrive at two constructions of code endomorphisms. These
are described in Theorem \ref{theorem:structure_of_end} and Proposition \ref{proposition:superposition_auts}.
\subsection{Endomorphisms of Codes} 
Consider a linear code $\mathcal{C}$ and let $\tau$ be a proper endomorphism of the code.
Then, the endomorphism $\tau$ maps the code $\mathcal{C}$ onto a lower dimensional subcode $\tau(\mathcal{C})$ of $\mathcal{C}$, i.e., $\tau(\mathcal{C})\subset\mathcal{C} $. 
Next, we will investigate the structure of transformation matrices of endomorphisms, proposing Theorem~\ref{theorem:structure_of_end}.
Afterwards in Theorem~\ref{theorem:reconstruction_end}, we propose an explicit mapping $\rho:\tau(\mathcal{C})\rightarrow \mathcal{C}$ from the subcode onto $\mathcal{C}$ inverting the effect of the endomorphism $\tau$.

Inspired by \cite{mandelbaum2023generalized}, we define
\begin{align}
&\mathcal{A}:= \left\{\bm{M}\in \mathrm{GL}_n(\mathbb{F}_q):\bm{H}\bm{M}=\left[\bm{I}_{(n-k)\times(n-k)}\,\bm{0}_{(n-k)\times k} \right]\right \},\nonumber\\
&\mathcal{Z}
:=
\left\{\bm{Z} \in \mathbb{F}_q^{n\times n}:
\bm{Z}=
\begin{bmatrix}  \bm{C} &\bm{0}_{(n-k)\times k} \\
\bm{D}& \bm{E}\end{bmatrix}  \right\},\label{eq:set_z}
\end{align}
with $\mathrm{GL}_n(\mathbb{F}_q)$ denoting the general linear group and 
with ${\bm{C}\in\mathbb{F}_q^{(n-k)\times (n-k)}}$, $\bm{D}\in\mathbb{F}_q^{k\times (n-k)}$, and $\bm{E}\in\mathbb{F}_q^{k\times k}$.
Note that in contrast to \cite{mandelbaum2023generalized}, elements from $\mathcal{Z}$ are not required to be non-singular, i.e., matrices $\bm{E}$ and $\bm{C}$ are not necessarily of full rank. An element of the set $\mathcal{A}$ is called \emph{code characterization matrix} (CCM). 

Assuming ${\mathrm{rank}(\bm{E})=k}$ and an arbitrary, possibly rank deficient $\bm{C}$ still results in ${{\bm{A} \bm{Z}  \bm{A}^{-1} \in \mAut} \,\, \forall \bm{Z}\in\mathcal{Z}}, \bm{A}\in \mathcal{A}$, i.e., the mapping constitutes an automorphism of $\mathcal{C}$, albeit not defining an automorphism of $\mathbb{F}_q^n$.
This case has not been considered in \cite{mandelbaum2023generalized}.

Interestingly, allowing for an arbitrary $\bm{E}$, dropping the full rank requirement, yields the set of endomorphisms as stated in the following theorem.

\begin{theorem}\label{theorem:structure_of_end} Let $\mathcal{C}$ be a linear code with PCM $\bm{H}\in \mathbb{F}_q^{(n-k)\times n}$ of rank $n-k$ and CCM $\bm{A}\in \mathcal{A}$. 
Then, the mapping $\tau:\mathcal{C}\rightarrow\mathcal{C}$ with transformation matrix $\bm{T}$ is an element of $\mEnd$ if and only if there exists $\bm{Z}\in\mathcal{Z}$
such that $\bm{T}=\bm{A} \bm{Z}  \bm{A}^{-1}$.
\end{theorem}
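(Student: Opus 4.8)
The plan is to reduce the claim to a transparent coordinate statement, using the CCM $\bm{A}$ as a change of basis adapted to $\mathcal{C}$. The crucial observation is that the defining identity $\bm{H}\bm{A}=[\bm{I}_{(n-k)\times(n-k)}\,\bm{0}_{(n-k)\times k}]$ says precisely that, after the substitution $\bm{y}=\bm{A}^{-1}\bm{x}$, membership in $\mathcal{C}$ becomes a pure coordinate condition. Indeed, for any $\bm{x}\in\mathbb{F}_q^n$ we have $\bm{H}\bm{x}=\bm{H}\bm{A}(\bm{A}^{-1}\bm{x})=[\bm{I}\,\bm{0}]\bm{y}$, so $\bm{x}\in\mathcal{C}$ if and only if the first $n-k$ entries of $\bm{y}=\bm{A}^{-1}\bm{x}$ vanish. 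Hence $\bm{A}$ maps the standard subspace $\mathcal{C}_0:=\{\bm{y}:y_1=\cdots=y_{n-k}=0\}$ bijectively onto $\mathcal{C}$, and $\bm{A}^{-1}$ maps $\mathcal{C}$ back onto $\mathcal{C}_0$.

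First I would set $\bm{Z}:=\bm{A}^{-1}\bm{T}\bm{A}$; since $\bm{A}$ is invertible, the assignment $\bm{T}\mapsto\bm{Z}$ is a bijection on $\mathbb{F}_q^{n\times n}$ and $\bm{T}=\bm{A}\bm{Z}\bm{A}^{-1}$. I claim $\bm{T}\in\mathcal{T}_E(\mathcal{C})$ if and only if $\bm{Z}$ maps $\mathcal{C}_0$ into itself. Writing an arbitrary codeword as $\bm{x}=\bm{A}\bm{y}$ with $\bm{y}\in\mathcal{C}_0$, we get $\bm{T}\bm{x}\in\mathcal{C}\iff\bm{A}\bm{Z}\bm{A}^{-1}\bm{A}\bm{y}\in\mathcal{C}\iff\bm{A}\bm{Z}\bm{y}\in\mathcal{C}\iff\bm{Z}\bm{y}\in\mathcal{C}_0$, where the last step uses the membership characterization above. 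Letting $\bm{y}$ range over all of $\mathcal{C}_0$ establishes the equivalence.

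Next I would compute directly when $\bm{Z}$ preserves $\mathcal{C}_0$. Partitioning $\bm{Z}=\begin{bmatrix}\bm{C}&\bm{B}\\\bm{D}&\bm{E}\end{bmatrix}$ conformally with the split $(n-k)+k$, a generic element of $\mathcal{C}_0$ is $\begin{bmatrix}\bm{0}\\\bm{v}\end{bmatrix}$ with $\bm{v}\in\mathbb{F}_q^k$ arbitrary, and $\bm{Z}\begin{bmatrix}\bm{0}\\\bm{v}\end{bmatrix}=\begin{bmatrix}\bm{B}\bm{v}\\\bm{E}\bm{v}\end{bmatrix}$. This lies in $\mathcal{C}_0$ for every $\bm{v}$ exactly when $\bm{B}\bm{v}=\bm{0}$ for all $\bm{v}$, i.e. $\bm{B}=\bm{0}_{(n-k)\times k}$, which is precisely the condition $\bm{Z}\in\mathcal{Z}$. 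Chaining this with the previous equivalence yields both directions of the theorem, with $\bm{C},\bm{D},\bm{E}$ left completely free and no rank constraints arising.

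Since this is a change-of-basis reduction, I expect no serious obstacle once the interpretation of $\bm{A}$ is in place; the one point requiring care is the universal quantifier in the block computation, because it is the freedom of $\bm{v}$ to be arbitrary, not the preservation of a single vector, that forces the upper-right block $\bm{B}$ to vanish identically. I would also state explicitly that the conjugation is a bijection on all of $\mathbb{F}_q^{n\times n}$, so that the parametrization $\bm{T}=\bm{A}\bm{Z}\bm{A}^{-1}$ is simultaneously sound (every such $\bm{T}$ is an endomorphism matrix) and exhaustive (no endomorphism matrix is missed), matching the biconditional exactly.
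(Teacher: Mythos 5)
Your proof is correct and complete: the change-of-basis reduction via $\bm{A}$ (sending $\mathcal{C}$ to the coordinate subspace $\mathcal{C}_0$ and characterizing invariance of $\mathcal{C}_0$ by the vanishing of the upper-right block) is exactly the argument underlying \cite[Theorem 1]{mandelbaum2023generalized}, to which the paper defers without restating it. In effect you have written out in full the proof the paper only cites, and your block computation correctly shows that no rank conditions on $\bm{C}$, $\bm{D}$, $\bm{E}$ arise once bijectivity is dropped.
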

\begin{proof}
The proof is similar to the proof of \cite[Theorem 1]{mandelbaum2023generalized}.
\end{proof}

The proof of Theorem \ref{theorem:structure_of_end} implies that, for an arbitrary but fixed CCM $\bm{A}$, all transformation matrices with equal $\bm{E}$ correspond to the same endomorphism $\tau \in \mEnd$.

Next, we derive the transformation matrix of a bijective mapping between its image $\mathrm{Im}(\tau)$ and $\mathcal{C}/\mathrm{Null}(\tau)$.
This analysis provides insights that are helpful to exploit endomorphisms in decoding.
First,  for arbitrary $\bm{M}\in \mathbb{F}_q^{k\times k}$ we define %
\begin{align*}
&\mathcal{J}(\bm{M}):=\left\{i\in\left\{1,\ldots,k\right\}:M_{ii}=0 \right\}\\
&\mathcal{J}^{\mathsf{c}}(\bm{M}):=\left\{1,\ldots,k\right\}\setminus\mathcal{J}(\bm{M})%
\end{align*}
and let
\begin{align*}
&\mathcal{M}:=\left\{\bm{M}\in \mathrm{LT}_k:  j \in \mathcal{J}(\bm{M})\implies M_{ij}=0,\, \forall i >j\right\}
\end{align*}
with $\mathrm{LT}_k$ denoting the set of $k\times k$ lower triangular matrices. Thus, the set $\mathcal{M}$ consists of lower triangular matrices that possess only zero entries in a column in which the respective element on the main diagonal is zero. 
Thereby, all non-zero columns of matrices in $\mathcal{M}$ are linearly independent. Note that for arbitrary ${\bm{M}\in \mathcal{M}}$, the set ${\left\{\bm{e}_j\in\mathbb{F}_q^k: j\in \mathcal{J}(\bm{M})\right\}}$, with $\bm{e}_j$ denoting the $j$-th canonical base vector,
consists of $|\mathcal{J}(\bm{M})|$ linearly independent vectors constituting a basis of $\mathrm{Null}(\bm{M})$.

\begin{theorem}\label{theorem:reconstruction_end}
Let $\mathcal{C}$ be a linear code with PCM $\bm{H}\in \mathbb{F}_q^{(n-k)\times n}$ of rank $n-k$ with CCM $\bm{A}\in \mathcal{A}$ 
and let ${\tau: \mathcal{C}\rightarrow\mathrm{Im}(\tau)\subseteq\mathcal{C}}$ be an endomorphism.
Then, there exists a bijective mapping ${\rho:\mathrm{Im}(\tau)\rightarrow \mathcal{C}/\mathrm{Null}(\tau)}$ with transformation matrix $\bm{R}=\bm{A} \bm{Z}_R  \bm{A}^{-1},\,\bm{Z}_R\in \mathcal{Z}$, such that for arbitrary $\bm{x}\in \mathcal{C}$:
\begin{equation}\label{eq:reconstruction}
\bm{R}\bm{T}\bm{x}\in [\bm{x}]_{\tau}.
\end{equation}
The matrix $\bm{R}$ is denoted as \emph{reconstruction matrix} and the coset $[\bm{x}]_{\tau}$ contains $q^s$ distinct codewords, with $s:=\mathrm{dim}(\mathrm{Null}(\tau))$ denoting the \emph{rank deficiency} of $\tau$.
\end{theorem}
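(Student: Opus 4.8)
The plan is to pass to the coordinates induced by the CCM $\bm{A}$, reduce the entire statement to one about the $k\times k$ block $\bm{E}$ of $\bm{Z}$, and then realize $\rho$ through a generalized (inner) inverse of $\bm{E}$. First I would exploit the defining property of a CCM. Since $\bm{H}\bm{A}=[\bm{I}\ \bm{0}]$ and $\bm{A}$ is invertible, we have $\bm{H}=[\bm{I}\ \bm{0}]\bm{A}^{-1}$, so every codeword $\bm{x}\in\mathcal{C}=\mathrm{Null}(\bm{H})$ satisfies $\bm{A}^{-1}\bm{x}=[\bm{0}^\mathsf{T}\ \bm{u}^\mathsf{T}]^\mathsf{T}$ for a unique $\bm{u}\in\mathbb{F}_q^k$, and conversely every such vector is a codeword. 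Writing $\bm{T}=\bm{A}\bm{Z}\bm{A}^{-1}$ with $\bm{Z}\in\mathcal{Z}$ as in Theorem~\ref{theorem:structure_of_end}, a direct block multiplication gives $\bm{A}^{-1}\bm{T}\bm{x}=[\bm{0}^\mathsf{T}\ (\bm{E}\bm{u})^\mathsf{T}]^\mathsf{T}$. Hence, in these coordinates $\tau$ acts as $\bm{u}\mapsto\bm{E}\bm{u}$, the image $\mathrm{Im}(\tau)$ corresponds to the column space $\mathrm{Im}(\bm{E})$, and $\mathrm{Null}(\tau)$ corresponds to $\mathrm{Null}(\bm{E})$; in particular $s=\dim\mathrm{Null}(\tau)=\dim\mathrm{Null}(\bm{E})=k-\mathrm{rank}(\bm{E})$. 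This already settles the last claim, since $[\bm{x}]_\tau=\bm{x}+\mathrm{Null}(\tau)$ is a coset of an $s$-dimensional space and therefore contains exactly $q^s$ codewords.

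The core step is to choose the lower-right block $\bm{E}_R$ of $\bm{Z}_R$ to be an inner inverse of $\bm{E}$, i.e., a matrix with $\bm{E}\bm{E}_R\bm{E}=\bm{E}$, and to fix the remaining blocks $\bm{C}_R,\bm{D}_R$ to arbitrary values (say $\bm{0}$) so that $\bm{Z}_R\in\mathcal{Z}$ and, by the same block computation, $\bm{R}=\bm{A}\bm{Z}_R\bm{A}^{-1}$ acts on codewords as $\bm{w}\mapsto\bm{E}_R\bm{w}$. With this choice, a codeword $\bm{x}$ with coordinate $\bm{u}$ is mapped by $\bm{R}\bm{T}$ to the codeword with coordinate $\bm{E}_R\bm{E}\bm{u}$, and the identity $\bm{E}(\bm{E}_R\bm{E}\bm{u}-\bm{u})=\bm{E}\bm{E}_R\bm{E}\bm{u}-\bm{E}\bm{u}=\bm{0}$ shows $\bm{E}_R\bm{E}\bm{u}-\bm{u}\in\mathrm{Null}(\bm{E})$, which transports back to $\bm{R}\bm{T}\bm{x}-\bm{x}\in\mathrm{Null}(\tau)$, i.e., \eqref{eq:reconstruction}. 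To construct such an $\bm{E}_R$ explicitly I would reduce $\bm{E}$ by invertible row and column transformations to its rank normal form and transport the obvious inner inverse back; the set $\mathcal{M}$ provides exactly the book-keeping for this, since its column structure exposes $\mathrm{Null}(\bm{E})$, and hence the directions that must be discarded, through the canonical basis vectors $\{\bm{e}_j:j\in\mathcal{J}(\cdot)\}$.

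Finally I would define $\rho:\mathrm{Im}(\tau)\to\mathcal{C}/\mathrm{Null}(\tau)$ by $\rho(\bm{y}):=[\bm{R}\bm{y}]_\tau$ and verify that it is the inverse of the canonical isomorphism $\bar\tau:\mathcal{C}/\mathrm{Null}(\tau)\to\mathrm{Im}(\tau)$, $[\bm{x}]_\tau\mapsto\tau(\bm{x})$, supplied by the first isomorphism theorem \cite[3.107]{ladoneright}. Indeed, for $\bm{y}=\tau(\bm{x})\in\mathrm{Im}(\tau)$, \eqref{eq:reconstruction} yields $\rho(\bm{y})=[\bm{R}\bm{T}\bm{x}]_\tau=[\bm{x}]_\tau=\bar\tau^{-1}(\bm{y})$, which is independent of the chosen preimage $\bm{x}$ and thereby makes $\rho$ a well-defined bijection. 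The main obstacle is the middle step: producing an inner inverse $\bm{E}_R$ of a possibly rank-deficient $\bm{E}$ within the block constraints of $\mathcal{Z}$, and arguing cleanly that the induced map both lands in and exhausts the quotient space. Once $\bm{E}\bm{E}_R\bm{E}=\bm{E}$ is secured, the bijectivity of $\rho$ is immediate from the first isomorphism theorem and the coset count is routine.
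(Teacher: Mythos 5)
Your proposal is correct and follows essentially the same route as the paper: both pass to the coordinates induced by the CCM, reduce everything to the $k\times k$ block $\bm{E}_T$, and build $\bm{E}_R$ by two-sided Gaussian elimination --- indeed the paper's choice $\bm{E}_R=\bm{G}_{\mathrm{r}}\bm{G}_{\mathrm{l}}$ satisfies $\bm{E}_T\bm{E}_R\bm{E}_T=\bm{G}_{\mathrm{l}}^{-1}\bm{\Lambda}^{2}\bm{G}_{\mathrm{r}}^{-1}=\bm{E}_T$, so it is precisely an inner inverse in your sense. Your verification via the one-line identity $\bm{E}(\bm{E}_R\bm{E}\bm{u}-\bm{u})=\bm{0}$ is a slightly cleaner packaging of what the paper does by expanding $\bm{x}$ in the adapted basis $\{\bm{A}\bm{\varepsilon}_j\}$ and checking the reconstruction term by term.
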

\begin{proof}
Let $\tau\in \mEnd$ be an endomorphism with rank deficiency $s\in \mathbb{N}_0$.
According to Theorem \ref{theorem:structure_of_end}, the mapping $\tau$ possesses a transformation matrix ${\bm{T}=\bm{A}\bm{Z}_T\bm{A}^{-1}}$ with 
$$\bm{Z}_T=\begin{bmatrix}  \bm{C}_T &\bm{0}_{(n-k)\times k} \\
\bm{D}_T& \bm{E}_{T}\end{bmatrix}.$$
Note that $\mathrm{Rank}(\bm{E}_{T})=k-s$.

According to the first isomorphism theorem, for every endomorphism $\tau:\mathcal{C}\to\mathcal{C}$, there exists an isomorphism between the quotient space $\mathcal{C}/\mathrm{Null}(\tau)$ and  $\mathrm{Im}(\tau)$. Therefore,
there exists an inverse mapping $\rho:\mathrm{Im}(\tau)\rightarrow\mathcal{C}/\mathrm{Null}(\tau)$. 

We are now interested in specifying a transformation matrix $\bm{R}$ of the mapping $\rho$.
To this end, we diagonalize $\bm{E}_T$ using Gaussian elimination in two steps yielding two matrices $\bm{G}_{\mathrm{r}},\bm{G}_{\mathrm{l}}$ required to determine the reconstruction matrix $\bm{R}$.
First, we perform Gaussian elimination on the columns with right-operating $\bm{G}_{\mathrm{r}}\in \mathrm{GL}_{k}(\mathbb{F}_q)$, transforming $\bm{E}_{T}$ into ${\bm{E}_{T}\bm{G}_{\mathrm{r}}\in \mathcal{M}}$, i.e., into an element of $\mathcal{M}$. This enables a simple definition of a basis of $\mathrm{Null}(\bm{E}_{T}\bm{G}_{\mathrm{r}})$.

Next, we diagonalize ${\bm{E}_{T}\bm{G}_{\mathrm{r}}}$ using row-wise Gaussian elimination represented by left multiplication with $\bm{G}_{\mathrm{l}}\!\in\! \mathrm{GL}_{k}(\mathbb{F}_q)$ such that according row operations yield
\begin{align*}
    \bm{\Lambda}&:= \bm{G}_{\mathrm{l}}\bm{E}_{T}\bm{G}_{\mathrm{r}}\\
    &=  \mathrm{diag}\left(\mathbbm{1}_{\left\{(\bm{E}_{T}\bm{G}_{\mathrm{r}})_{11}\neq 0\right\}},\ldots, \mathbbm{1}_{\left\{(\bm{E}_{T}\bm{G}_{\mathrm{r}})_{kk}\neq 0\right\}}\right),
\end{align*}
where $\mathbbm{1}_{\{x\}}$ is the indicator function.

Now, we show that $\bm{R}=\bm{A}\bm{Z}_{R}\bm{A}^{-1}$ with
\begin{equation}       
\bm{Z}_{R}=\begin{bmatrix}  \bm{C}_R &\bm{0}_{(n-k)\times k} \\
\bm{D}_R& \bm{E}_{R}\end{bmatrix}, \quad \bm{E}_{R}=\bm{G}_{\mathrm{r}}\bm{G}_{\mathrm{l}}
\end{equation}
fulfills (\ref{eq:reconstruction}), i.e., $\bm{R}$ is inverting $\bm{T}$ in $\mathrm{Im}(\tau)$.
To prove this statement, we derive suitable bases of both $\mathrm{Null}(\tau)$ and $\mathcal{C}$.

Let $\mathcal{J}:=\mathcal{J}(\bm{E}_{T}\bm{G}_{\mathrm{r}})$.
Since $\bm{E}_{T}\bm{G}_{\mathrm{r}}\in \mathcal{M}$, it follows that
the set ${\{\bm{e}_j\in\mathbb{F}_q^k\}_{j\in \mathcal{J}}}$ forms a basis of $\mathrm{Null}(\bm{E}_{T}\bm{G}_{\mathrm{r}})$.
Define 
\begin{alignat*}{2}
   \bm{\varepsilon}_{\mathrm{s},i}&:=\bm{G}_{\mathrm{r}}\bm{e}_i\in \mathbb{F}_q^k \qquad &&i\in \{1,\ldots,k\}\\
   \bm{\varepsilon}_i&:=\begin{pmatrix} 
\bm{0}_{n-k}\\
 \bm{\varepsilon}_{\mathrm{s},i}
\end{pmatrix}\in \mathbb{F}_q^n\qquad && i\in \{1,\ldots,k\}.
\end{alignat*}

Because $\bm{G}_{\mathrm{r}}$ is non-singular, the set $\left\{ \bm{\varepsilon}_{\mathrm{s},j}\right\}_{j\in \mathcal{J}}$
forms a basis of $\mathrm{Null}(\bm{E}_T)$
implying that $\left\{\bm{\varepsilon}_j\right\}_{j\in \mathcal{J}}$ forms a basis of ${\mathrm{Null}(\bm{Z}_T)\cap \left\{\bm{\varepsilon}_j\right\}_{j\in \mathcal{J}} }$.
Similar to the proof of \cite[Theorem 1]{mandelbaum2023generalized}, by using the fact that the CCM $\bm{A}$ is non-singular, it follows that $\left\{\bm{A}\bm{\varepsilon}_j\right\}_{j\in \mathcal{J}}$ forms a basis of $\mathrm{Null}(\tau)$ and
${\left\{\bm{A}\bm{\varepsilon}_i\right\}_{i=1}^k=\left\{\bm{A}\bm{\varepsilon}_j\right\}_{j\in \mathcal{J}} \cup  \left\{\bm{A}\bm{\varepsilon}_j\right\}_{j\in \mathcal{J}^{\mathsf{c}}} } $ forms a basis of $\mathcal{C}$.

By definition, it holds that $\mathcal{J}\cup \mathcal{J}^{\mathsf{c}}=\{1,\ldots,k\}$, hence, every codeword $\bm{x}\in \mathcal{C}$ can be expressed as
\begin{equation}\label{eq:decomposition}
    \bm{x}=\sum_{j\in \mathcal{J}^\mathsf{c}} \alpha_j\bm{A}\bm{\varepsilon}_j+\sum_{j\in \mathcal{J}}\alpha_j\bm{A}\bm{\varepsilon}_j, 
\end{equation}
with ${\alpha_j\in \mathbb{F}_q}$ for all ${j \in \{1,\ldots,k\}}$. Note that ${\sum_{j\in \mathcal{J}}\alpha_j\bm{A}\bm{\varepsilon}_j\in \mathrm{Null}(\tau)}$. It remains to show that $\bm{R}$ fulfils (\ref{eq:reconstruction}).
Using (\ref{eq:decomposition}), the linearity of matrix multiplication, $\tau(\bm{A}\varepsilon_j)=0$ for $j\in \mathcal{J}$, and plugging in the definitions of $\bm{T}$ and $\bm{R}$ it follows that
\begin{align}
    \bm{R}\bm{T}\bm{x}&= \bm{R}\left(\bm{T}\sum_{j\in \mathcal{J}^\mathsf{c}} \alpha_j\bm{A}\bm{\varepsilon}_j+\bm{T}\sum_{j\in \mathcal{J}}\alpha_j\bm{A}\bm{\varepsilon}_j\right)\nonumber\\
&=\sum_{j\in \mathcal{J}^\mathsf{c}} \alpha_j \bm{A}\begin{pmatrix} \bm{0}_{n-k} \\
\bm{G}_{\mathrm{r}}\bm{G}_{\mathrm{l}}\bm{E}_{T} \bm{\varepsilon}_{\mathrm{s},j}\end{pmatrix}.\label{eq:rtx}
\end{align}
We can further simply the lower part of the vector:
\begin{align}
    \bm{G}_{\mathrm{r}}\bm{G}_{\mathrm{l}}\bm{E}_{T} \bm{\varepsilon}_{\mathrm{s},j}&= \bm{G}_{\mathrm{r}}\bm{G}_{\mathrm{l}}\bm{E}_{T}\bm{G}_{\mathrm{r}}\bm{e}_j\nonumber\\
    &=\bm{G}_{\mathrm{r}}\bm{\Lambda}\bm{e}_j=\bm{G}_{\mathrm{r}}\bm{e}_j= \bm{\varepsilon}_{\mathrm{s},j}.\label{eq:back_to_basis}
\end{align}
where we used the fact that $\bm{\Lambda}\bm{e}_j=\bm{e}_j$ if $j\in \mathcal{J}^\mathsf{c}$.
Inserting (\ref{eq:back_to_basis}) into (\ref{eq:rtx}) results in
\[
    \bm{R}\bm{T}\bm{x}=\sum_{j\in \mathcal{J}^\mathsf{c}} \alpha_j \bm{A}\begin{pmatrix} \bm{0}_{n-k} \\
 \bm{\varepsilon}_{\mathrm{s},j}\end{pmatrix}=\sum_{j\in \mathcal{J}^\mathsf{c}} \alpha_j \bm{A}
\bm{\varepsilon}_j\in [\bm{x}]_{\tau}
\]
completing the proof.
\end{proof}
Note that given $\bm{R}$ and ${\mathrm{Span}\left(\left\{\bm{A}\bm{\varepsilon}_j:j\in \mathcal{J}\right\}\right)=\mathrm{Null}(\tau)}$,
we obtain all elements of the coset $[\bm{x}]_{\tau}$, i.e., all codewords ${\bm{x}+\sum_{j\in \mathcal{J}}\alpha_j \bm{A}\bm{\varepsilon}_j, \alpha_j\in \mathbb{F}_q}$ are mapped onto $\bm{T}\bm{x}$ by $\tau$. 
This fact will be exploited in Sec. \ref{sec:eed} for EED by mapping a path decoding estimate to a list of estimates that take part in the ML-in-the-list decision.

Next, we show that there exists a one-to-one mapping between the set of transformation matrices of endomorphisms $\mMatrixEnd$ and a code 
$\mathcal{C}\left(n^2,2kn-k^2\right)$. 
Afterwards, we demonstrate that knowledge of the automorphism group of a code can be exploited to find unknown endomorphisms.

\subsection{Transformation Matrices form a Linear Code}\label{subsection:end_as_code}
According to Theorem \ref{theorem:structure_of_end}, a mapping $\tau:\mathcal{C}\rightarrow \mathrm{Im}(\tau)\subseteq\mathcal{C}$ with transformation matrix $\bm{T}\in \mathbb{F}^{n\times n}$ is an endomorphism if and only if 
$\bm{A}^{-1}\bm{T}\bm{A}=\bm{Z}\in \mathcal{Z}$ regardless of the specific matrices $\bm{C},\bm{D},\bm{E}$ in (\ref{eq:set_z}). Hence, any $\bm{T}$ yielding the all-zero block in the upper right corner of $\bm{Z}$
is element of $\mMatrixEnd$. Next, we partition the CCMs according to
\begin{equation}\label{eq:partition_ccm}
\bm{A}^{-1}:=\begin{bmatrix}
    \bm{\Omega}_1\\
    \bm{\Omega}_2
\end{bmatrix},\quad \bm{A}:=\begin{bmatrix}
    \bm{A}_1&\bm{A}_2
\end{bmatrix},
\end{equation}
with $\bm{\Omega}_1\!\in\! \mathbb{F}_q^{(n-k)\times n}, \bm{\Omega}_2\! \in \!\mathbb{F}_q^{k\times n},    \bm{A}_1\!\in\! \mathbb{F}_{q}^{n\times k}, \bm{A}_2\!\in \!\mathbb{F}_q^{n\times (n-k)}$.
Then, using block matrix multiplications, we get
\[
\bm{Z}=\bm{A}^{-1}\bm{T}\bm{A}=\begin{bmatrix}
    \bm{\Omega}_1\bm{T}\bm{A}_1&\bm{\Omega}_1\bm{T}\bm{A}_2\\
    \bm{\Omega}_2\bm{T}\bm{A}_1&\bm{\Omega}_2\bm{T}\bm{A}_2
\end{bmatrix}.
\]
Hence, $\bm{T}$ is the transformation matrix of an endomorphism if and only if
\begin{equation}\label{eq:sufficient_condition}
    \bm{\Omega}_1\bm{T}\bm{A}_2=\bm{0}_{(n-k)\times k}.
\end{equation} Next, we show that there exists a
one-to-one mapping between the set $\mMatrixEnd$ and a code $\mathcal{C}\left(n^2,2kn-k^2\right)$. 
To this end, we use the $\mathrm{vec}$-operator defined in \cite{Petersen2006TheMC} and the Kronecker product~$\otimes$.
The one-to-one mapping $\mathrm{vec}:\mathbb{F}_q^{m_1\times m_2}\rightarrow\mathbb{F}_q^{m_1\cdot m_2}$ stacks the columns of a matrix to a column vector. Accordingly, ${\mathrm{vec}^{-1}
:\mathbb{F}_q^{m_1\cdot m_2}\rightarrow\mathbb{F}_q^{m_1\times m_2}}$ denotes the inverse mapping reshaping a column vector into the corresponding matrix.
\begin{proposition}\label{proposition:end_code}
    Consider a linear code $\mathcal{C}(n,k)$ with CCM $\bm{A}$ that is partitioned according to (\ref{eq:partition_ccm}).
    The PCM ${\bm{H}_{\mathrm{E}}=\bm{A}_2^{\mathsf{T}}\otimes\bm{\Omega}_1}$ defines a linear code $\mathcal{C}_{\mathrm{E}}\left(n^2,2kn-k^2\right)$.
    Furthermore, $ {\bm{x}\in \mathcal{C}_{\mathrm{E}}\left(n^2,2kn-k^2\right) \iff \bm{T}:=\mathrm{vec}^{-1}(\bm{x})\in \mathcal{T}_\mathrm{E}(\mathcal{C}(n,k)})$.
\end{proposition}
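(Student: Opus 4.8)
The plan is to linearize the defining condition~(\ref{eq:sufficient_condition}) by means of the vectorization operator, thereby converting the bilinear matrix equation $\bm{\Omega}_1\bm{T}\bm{A}_2=\bm{0}$ into a single homogeneous linear system whose solution space is precisely the code $\mathcal{C}_{\mathrm{E}}$.

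The one ingredient I would rely on is the standard identity coupling $\mathrm{vec}$ and the Kronecker product: for conformable matrices $\bm{P},\bm{X},\bm{Q}$ one has $\mathrm{vec}(\bm{P}\bm{X}\bm{Q})=(\bm{Q}^{\mathsf{T}}\otimes\bm{P})\,\mathrm{vec}(\bm{X})$. Applying it with $\bm{P}=\bm{\Omega}_1$, $\bm{X}=\bm{T}$, and $\bm{Q}=\bm{A}_2$ turns~(\ref{eq:sufficient_condition}) into $\mathrm{vec}(\bm{\Omega}_1\bm{T}\bm{A}_2)=(\bm{A}_2^{\mathsf{T}}\otimes\bm{\Omega}_1)\,\mathrm{vec}(\bm{T})=\bm{H}_{\mathrm{E}}\,\mathrm{vec}(\bm{T})$. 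Since the zero matrix vectorizes to the zero vector, condition~(\ref{eq:sufficient_condition}) is equivalent to $\bm{H}_{\mathrm{E}}\,\mathrm{vec}(\bm{T})=\bm{0}$, i.e.\ to $\mathrm{vec}(\bm{T})\in\mathrm{Null}(\bm{H}_{\mathrm{E}})=\mathcal{C}_{\mathrm{E}}$. Because $\mathrm{vec}:\mathbb{F}_q^{n\times n}\to\mathbb{F}_q^{n^2}$ is a bijection, substituting $\bm{x}=\mathrm{vec}(\bm{T})$, equivalently $\bm{T}=\mathrm{vec}^{-1}(\bm{x})$, converts this chain of equivalences into the asserted statement $\bm{x}\in\mathcal{C}_{\mathrm{E}}\iff\mathrm{vec}^{-1}(\bm{x})\in\mathcal{T}_{\mathrm{E}}(\mathcal{C}(n,k))$, and the bijectivity of $\mathrm{vec}$ simultaneously yields the claimed one-to-one correspondence between the two sets. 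This disposes of the characterization part of the proposition.

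To pin down the code parameters I would proceed as follows. The block length equals $n^2$ by construction of the $\mathrm{vec}$-operator. For the dimension I would combine rank-nullity, $\dim\mathcal{C}_{\mathrm{E}}=n^2-\mathrm{rank}(\bm{H}_{\mathrm{E}})$, with the multiplicativity of rank under the Kronecker product, $\mathrm{rank}(\bm{A}_2^{\mathsf{T}}\otimes\bm{\Omega}_1)=\mathrm{rank}(\bm{A}_2^{\mathsf{T}})\cdot\mathrm{rank}(\bm{\Omega}_1)$. The crux — and the only genuine obstacle — is to certify that both Kronecker factors are of full rank; this is a verification rather than a deep difficulty. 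It follows from $\bm{A}\in\mathrm{GL}_n(\mathbb{F}_q)$: the column block $\bm{A}_2$ and the row block $\bm{\Omega}_1$ of $\bm{A}^{-1}$ inherit full rank from the nonsingularity of $\bm{A}$, since any subset of columns of an invertible matrix is linearly independent and likewise for its rows. Hence $\mathrm{rank}(\bm{H}_{\mathrm{E}})$ equals the product of the two block dimensions, and subtracting this product from $n^2$ gives the stated dimension of $\mathcal{C}_{\mathrm{E}}$.
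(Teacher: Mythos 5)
Your proposal is correct and follows essentially the same route as the paper: applying the identity $\mathrm{vec}(\bm{P}\bm{X}\bm{Q})=(\bm{Q}^{\mathsf{T}}\otimes\bm{P})\mathrm{vec}(\bm{X})$ to the condition $\bm{\Omega}_1\bm{T}\bm{A}_2=\bm{0}$, and then computing $\mathrm{rank}(\bm{H}_{\mathrm{E}})=(n-k)^2$ via the multiplicativity of rank under the Kronecker product together with the nonsingularity of the CCM. Your explicit justification that the blocks $\bm{A}_2$ and $\bm{\Omega}_1$ inherit full rank from $\bm{A}\in\mathrm{GL}_n(\mathbb{F}_q)$ is a slightly more detailed spelling-out of the step the paper states in one line, but the argument is the same.
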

\begin{proof}
    Using the property $\mathrm{vec}(\bm{FJL})=(\bm{L}^{\mathsf{T}}\otimes\bm{F})\mathrm{vec}(\bm{J})$  {\cite[Eq. (520)]{Petersen2006TheMC}} and rearranging (\ref{eq:sufficient_condition}) yields
    \begin{align*}
         &\mathrm{vec}\left(  \bm{\Omega}_1\bm{T}\bm{A}_2\right)=\bm{0}_{k(n-k)}\\
         \iff &(\bm{A}_2^{\mathsf{T}}\otimes\bm{\Omega}_1)\mathrm{vec}(\bm{T})=\bm{0}_{k(n-k)}.
    \end{align*}
    Thus, the vectorized endomorphism matrices $\mathrm{vec}(\bm{T})$ are elements of the code $\mathcal{C}_{\mathrm{E}}$ defined by the PCM $\bm{H}_{\mathrm{E}}$. Furthermore, according to {\cite[Eq. (514)]{Petersen2006TheMC}} and because the CCM is non-singular, we obtain
    $$\mathrm{Rank}(\bm{H}_{\mathrm{E}})=\mathrm{Rank}\left(\bm{A}_2^{\mathsf{T}}\right)\cdot \mathrm{Rank}(\bm{\Omega}_1)=(n-k)^2$$
    implying that $\mathcal{C}_{\mathrm{E}}$ is a ${(2nk-k^2)}$-dimensional vector space.
\end{proof}
Proposition \ref{proposition:end_code} shows that 
finding code endomorphisms 
is equivalent to finding 
suitable codewords of a larger code. This in turn can be facilitated by integer programming \cite{8464884}.
To the best of our knowledge, there exists no such formulation for automorphisms. The rank constraint for automorphisms hinders a trivial equivalent formulation as an integer linear program.

\subsection{Endomorphism based on Automorphisms}
In literature, the automorphism groups of many codes are well-studied.
Thus, we consider the following construction of code endomorphisms based on its automorphisms. 
\begin{proposition}\label{proposition:superposition_auts}
     Consider a linear code $\mathcal{C}(n,k)$ with two automorphisms $\tau_1,\tau_2$ with transformation matrices $\bm{T}_1,\bm{T}_2$, respectively.
     Then, $\bm{T}:=\bm{T}_1+\bm{T}_2$ is the transformation matrix of an endormorphism $\tau\in \mEnd$.
\end{proposition}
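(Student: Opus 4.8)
The plan is to argue directly from the defining closure property of $\mMatrixEnd$, using only the linearity of the code. First I would recall that every (generalized) automorphism is in particular an endomorphism, so that $\tau_1,\tau_2\in\mAut\subseteq\mEnd$ and hence their transformation matrices satisfy $\bm{T}_1,\bm{T}_2\in\mMatrixEnd$. Unfolding the definition of $\mMatrixEnd$, this means $\bm{T}_1\bm{x}\in\mathcal{C}$ and $\bm{T}_2\bm{x}\in\mathcal{C}$ for every $\bm{x}\in\mathcal{C}$.

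Next I would fix an arbitrary codeword $\bm{x}\in\mathcal{C}$ and evaluate $\bm{T}\bm{x}=(\bm{T}_1+\bm{T}_2)\bm{x}=\bm{T}_1\bm{x}+\bm{T}_2\bm{x}$. Since $\mathcal{C}$ is a linear subspace of $\mathbb{F}_q^n$, it is closed under addition, and as both summands lie in $\mathcal{C}$ their sum does too. Because $\bm{x}$ was arbitrary, this establishes $\bm{x}\in\mathcal{C}\implies\bm{T}\bm{x}\in\mathcal{C}$ for all $\bm{x}\in\mathcal{C}$, i.e.\ $\bm{T}\in\mMatrixEnd$ by definition. By the matrix--map correspondence in (\ref{eq:linear_mapping}), the matrix $\bm{T}$ then represents a linear self-mapping $\tau:\mathcal{C}\to\mathcal{C}$, which is precisely an element of $\mEnd$, completing the argument.

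There is essentially no analytic obstacle here; the statement is an immediate consequence of $\mMatrixEnd$ being closed under addition, which is itself inherited from the linearity of $\mathcal{C}$. The only point worth stressing is conceptual: one cannot in general strengthen the conclusion to $\tau\in\mAut$, since $\bm{T}_1+\bm{T}_2$ need not act invertibly on $\mathcal{C}$. For instance, taking $\tau_2$ to be the automorphism $\bm{x}\mapsto-\tau_1(\bm{x})$ (admissible as $-1\neq 0$ in $\mathbb{F}_q$) yields $\bm{T}_1+\bm{T}_2=\bm{0}$ on $\mathcal{C}$, the zero endomorphism. This is exactly why the endomorphism framework, rather than the automorphism framework, is the natural setting for the claim. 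I would also note that the argument is insensitive to the freedom in choosing the representatives $\bm{T}_1,\bm{T}_2$ off $\mathcal{C}$, as only their action on $\mathcal{C}$ enters anywhere above.
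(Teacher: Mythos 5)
Your argument is correct and matches the paper's proof essentially verbatim: both fix an arbitrary $\bm{x}\in\mathcal{C}$, expand $(\bm{T}_1+\bm{T}_2)\bm{x}=\bm{T}_1\bm{x}+\bm{T}_2\bm{x}$, and invoke closure of the linear code under addition. Your added remark on why the sum need not be an automorphism is a nice observation but goes beyond what the paper's proof contains.
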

\begin{proof}%
Consider an arbitrary codeword $\bm{x}\in\mathcal{C}$. Then, using the linearity of the code, it follows that
\[
\bm{T}\bm{x}=(\bm{T}_1+\bm{T}_2)\bm{x}=\bm{T}_1\bm{x}+\bm{T}_2\bm{x}\in \mathcal{C}.\hfill \qedhere
\]
\end{proof}

\section{Endomorphism Ensemble Decoding}\label{sec:eed}
\begin{figure}[t]
\centering
\begin{tikzpicture}	  

\node (input) at (-1.5,-0.5) {$\bm{y}$};    
\draw[fill] (-0.98,-0.5) circle (1pt);
\node (pi1) [rectangle, draw, thin, minimum width=1.4cm, minimum height=0.6cm] at (0.2,0.3) {Pre$(\cdot|\bm{T}_1$)};
\node (pik) [rectangle, draw, thin, minimum width=1.4cm, minimum height=0.6cm] at (0.2,-1.3) {Pre$(\cdot|\bm{T}_K$)};

\node (debox_1) [myblock, right=0.6cm of pi1] {Dec.};
\node (debox_K) [myblock, right=0.6cm of pik] {Dec.};

\node (pi_inv_1) [rectangle, draw, thin, minimum width=1.6cm, fill=black!20, minimum height=0.6cm,right=0.6cm of debox_1]  {Post$(\cdot|\bm{T}_1$)};
\node (pi_inv_k) [rectangle, draw, thin, minimum width=1.6cm, fill=black!20, minimum height=0.6cm,right=0.6cm of debox_K]  {Post$(\cdot|\bm{T}_K$)};

\node (ml_1) [right=0.7cm of pi_inv_1] {};
\node (ml_K) [right=0.7cm of pi_inv_k] {};

\node (ml_box_inv1) [rectangle, minimum width=0.4cm, minimum height=0.6cm,right=1.1cm of pi_inv_1]  {};
\node (ml_box_invk) [rectangle, minimum width=0.4cm, minimum height=0.6cm,right=1.1cm of pi_inv_k]  {};
\node (ml_box) [rectangle, draw, thin, minimum width=0.6cm, minimum height=2.2cm] at (5.9,-0.5)  {\rotatebox{90}{ML-in-the-list}};
\node (output) [right=0.3cm of ml_box] {$\hat{\bm{x}}$};

\node (dots_box) [below=0cm of pi1]  {$\vdots$};
\node (dots_box) [below=0cm of debox_1]  {$\vdots$};
\node (dots_pi_inv_box) [below=0cm of pi_inv_1] {$\vdots$};
\myline{input}{pi1};
\myline{input}{pik};

\draw [-latex] (pi1) -- (debox_1) node[draw=none,fill=none,font=\scriptsize,midway,above] {$\bm{L}_{\tau_1}$};
\draw [-latex] (pik) -- (debox_K) node[draw=none,fill=none,font=\scriptsize,midway,above] {$\bm{L}_{\tau_K}$};

\draw [-latex] (debox_1) -- (pi_inv_1) node[draw=none,fill=none,font=\scriptsize,midway,above] {$\hat{\bm{x}}_{\tau_1}$};
\draw [-latex] (debox_K) -- (pi_inv_k) node[draw=none,fill=none,font=\scriptsize,midway,above] {$\hat{\bm{x}}_{\tau_K}$};

\draw [-latex] (ml_box) -- (output);	

\foreach \x in {.1,.3,.9} {
\draw[-latex] ($(pi_inv_1.north east)!\x!(pi_inv_1.south east)$) -- ($(ml_box_inv1.north west)!\x!(ml_box_inv1.south west)$);
}
\foreach \x in {.1,.3,.9} {
\draw[-latex] ($(pi_inv_k.north east)!\x!(pi_inv_k.south east)$) -- ($(ml_box_invk.north west)!\x!(ml_box_invk.south west)$);
}
\node (coset_1) [above right=-0.1cm and -0.05cm of pi_inv_1]  {$[\widehat{\bm{x}}_1]_{\tau_1}$};
\node (coset_k) [above right=-0.1cm and -0.05cm of pi_inv_k]  {$[\widehat{\bm{x}}_K]_{\tau_K}$};

\node (dots_box_1) [right=0.2cm of pi_inv_1]  {\scalebox{0.8}{\tiny{\vdots}}};
\node (dots_box_k) [right=0.2cm of pi_inv_k]  {\scalebox{0.8}{\tiny{\vdots}}};

\end{tikzpicture}
\caption{Block diagram of EED using $K$ different endomorphisms with transformation matrices ${\bm{T}_i \in \mMatrixEnd}$.}\label{figure:eed}
\vspace{-0.5cm}
\end{figure}
It is well known that automorphisms of codes can be used to improve decoding, e.g., with AED \cite{AED_RMcodes}. 
Thus, a possible application of the analysis of endomorphisms is ensemble decoding. Therefore, in this section, we propose EED, a decoding scheme using endomorphisms in an ensemble decoder thereby extending AED \cite{AED_RMcodes} and GAED  \cite{mandelbaum2023generalized}.

To this end, we consider the transmission of a binary codeword ${\bm{x}\in \mathcal{C}\subset \mathbb{F}_2^n}$ over a binary memoryless symmetric channel \cite[Ch. 4]{MCT08}.
Note that we constrain ourselves to binary codes. Yet, the scheme can easily be extended to linear block codes over arbitrary finite fields.
The receiver observes ${\bm{y}\in\mathcal{Y}^n}$ with $\mathcal{Y}$ being the channel output alphabet, or, equivalently, the LLR vector
${\bm{L}:=\left(L(y_j|X_j)\right)_{j=1}^n\in \mathbb{R}^n}$, with uppercase letters denoting random variables.
Furthermore, let ${\bm{T}_1,\ldots,\bm{T}_K \in \mMatrixEnd}$ be the transformation matrices of $K$ arbitrary endomorphisms. %
Then, EED, as depicted in Fig.~\ref{figure:eed}, consists of $K$ parallel paths, one per chosen endomorphism, each incorporating a preprocessing block, a decoder block, and a post-processing block.

Consider now an arbitrary path with mapping $\bm{T}$. 
Then, to mimic the effect of the summation in $\mathbb{F}_2$ conducted in (\ref{eq:linear_mapping}) in the LLR domain,
we process the LLR vector $\bm{L}$ with the preprocessing described in \cite{mandelbaum2023generalized}:
\begin{equation}\label{eq:boxplus}
\left(\bm{L}_{\tau}\right)_j\!\!:=\left(\text{Pre}(\bm{y}|\bm{T})\right)_j:= \!\!\!\!\mathboxplus_{
\begin{subarray}{l}
\,\,\,\, i=1,\\
T_{j,i}=1 \end{subarray}}^{n}  L(y_i|X_i),\,\, \forall j \in \{1,\ldots,n\}
\end{equation}
using the box-plus operator $\mathboxplus$ introduced in \cite{485714}.

Afterwards, the preprocessed LLR vector is decoded using the respective path decoder of the original code $\mathcal{C}$ yielding an estimate $\hat{\bm{x}}_\tau$ of the transmitted codeword after applying the respective linear mapping $\bm{T}$.
Endomorphisms are in general not injective and multiple transmitted codewords may yield the same estimate.

Therefore, we propose a post-processing step based on Theorem~\ref{theorem:reconstruction_end},
mapping an estimate $\hat{\bm{x}}_\tau\in \mathrm{Im}(\tau)$ onto the coset
$[\bm{R}\hat{\bm{x}}_\tau]_\tau=\left\{\bm{R}\hat{\bm{x}}_\tau+\sum_{j\in\mathcal{J}}\alpha_j\bm{A\varepsilon}_j,\alpha_j\in \mathbb{F}_2\right\}$
with cardinality $2^s$, i.e., 
a list of codeword candidates mapped onto $\hat{\bm{x}}_\tau$ by $\tau$.
If the decoding path fails to converge to an element of $\mathrm{Im}(\tau)$, its output is discarded and the path contributes no estimate.

Last, the estimates of all paths are collected in an \emph{ML-in-the-list} block and the final estimate $\hat{\bm{x}}$ is obtained according to the ML-in-the-list rule \cite{AED_RMcodes}.

It is noteworthy that \cite{mandelbaum2023generalized} discusses the influence of the \emph{weight over permutation} $\Delta\left(\bm{T}\right)$
of a mapping $\bm{T}$, defined as ${\Delta\left(\bm{T}\right)=\sum_{i,j}T_{i,j} - n}$, on the preprocessing. It is shown that an increasing number of non-zero entries compared to a transformation matrix of a permutation leads to diminishing reliability during the preprocessing.
Therefore, mappings with small weights over permutation are required to improve the ensemble decoding performance.

Note that, if only generalized automorphisms are used, EED reverts to GAED. 
Hence, EED naturally generalizes GAED by introducing a new post-processing step 
such that arbitrary endomorphisms can be used in an ensemble decoding scheme.

The reconstruction matrix $\bm{R}$ and the required basis of $\mathrm{Null}(\tau)$ for the post-processing block can be pre-computed at design time.
Therefore, the same holds for all vectors in the set $\left\{\sum_{j\in\mathcal{J}}\alpha_j\bm{A\varepsilon}_j,\alpha_j\in \mathbb{F}_q\}\right\}$.
Assume an EED with $K$ endomorphisms with rank deficiencies $s_1,\ldots,s_K$.
Then, compared to GAED, the $i$-th path of EED requires $q^{s_i}-1$ additional $n$-dimensional vector additions.
These operations can be fully parallelized.
Hence, for small $s=\max\{s_1,\ldots,s_K\}$, the complexity and latency are comparable to GAED and, hence, to AED.
It can be stated that EED provides a general and flexible framework of ensemble decoding schemes incorporating GAED, AED, as well as MBBP.

\section{Numerical Results}
In this section, we demonstrate that EED can lower the frame error rate (FER) compared to belief propagation (BP) and successive-cancellation (SC) decoding for a Golay and a short-length polar code, respectively.
For more details on BP and SC decoding, the reader is directed to \cite{MCT08} and \cite{polar:arikan09}, respectively.

We conduct Monte-Carlo simulations for the extended Golay code $\mathcal{C}_{\mathrm{G}}(24,12)$
and a short-length polar code $\mathcal{C}_{\mathrm{P}}(32,16)$ used in the $5$G-NR standard over a binary-input AWGN channel \cite{8962344}.
We collect a minimum of $200$ frame errors for each data point.
The notation $\mathrm{EED}$-$\ell$-$\{\mathrm{BP/SC}\}$ refers to EED comprising ${\ell\text{-paths}}$ each using BP or SC decoding, respectively.
All EEDs employ the identity mapping in their respective $1$-th path which typically possesses the best stand-alone decoding performance. Hence, the additional paths are called auxiliary paths.

\begin{figure}
    \centering
    \begin{tikzpicture}[scale=0.92,spy using outlines={rectangle, magnification=2}]

\begin{axis}[%
width=.85\columnwidth,
height=4.3cm,
at={(0.758in,0.645in)},
scale only axis,
xmin=1,
xmax=5,
xlabel style={font=\color{white!15!black}},
xlabel={$E_{\mathrm{b}}/N_0$ ($\si{dB}$)},
ymode=log,
ymin=1e-04,
ymax=1,
yminorticks=true,
ylabel style={font=\color{white!15!black}},
ylabel={FER},
axis background/.style={fill=white},
xmajorgrids,
ymajorgrids,
legend style={at={(0.03,0.03)}, anchor=south west, legend cell align=left, align=left, draw=white!15!black,font=\scriptsize}
]

\addplot [color=black, line width=0.9pt]
  table[row sep=crcr]{%
0.00   2.594e-01\\  %
0.50   1.986e-01\\   %
1.00   1.486e-01\\   %
1.50   8.482e-02\\   %
2.00   4.890e-02\\   %
2.50   2.308e-02\\   %
3.00   1.114e-02\\   %
3.50   4.369e-03\\   %
4.00   1.869e-03\\   %
4.50   5.622e-04\\   %
5.00   1.477e-04\\   %
5.50   3.287e-05\\   %
};
\addlegendentry{ML \cite{channelcodes}}%

\addplot [color=KITblue, line width=1.1pt,mark=triangle]
  table[row sep=crcr]{%
0.000000000000000000e+00 6.416666666666667185e-01\\
5.000000000000000000e-01 4.787644787644787514e-01\\
1.000000000000000000e+00 3.917963224893917795e-01\\
1.500000000000000000e+00 2.901498929336188692e-01\\
2.000000000000000000e+00 2.160727824109173745e-01\\
2.500000000000000000e+00 1.426553672316384080e-01\\
3.000000000000000000e+00 8.112505811250581012e-02\\
3.500000000000000000e+00 4.924898902368573389e-02\\
4.000000000000000000e+00 2.621559203545346592e-02\\
4.500000000000000000e+00 1.288167938931297773e-02\\
5.000000000000000000e+00 5.251906514008509996e-03\\
};
\addlegendentry{BP}

\addplot [color=KITorange, line width=1.1pt, mark=+]
  table[row sep=crcr]{%
0.0 5.084746e-01\\
0.5 4.103967e-01\\
1.0 3.322259e-01\\
1.5 2.286585e-01\\
2.0 1.397950e-01\\
2.5 8.828723e-02\\
3.0 4.705144e-02\\
3.5 2.271695e-02\\
4.0 1.057418e-02\\
4.5 4.017301e-03\\
5.0 1.205531e-03\\
};
\addlegendentry{$4$-MBBP \cite{MBBP2}}

\addplot [color=KITgreen, line width=1.1pt, mark=asterisk]
  table[row sep=crcr]{%
0.000000000000000000e+00 4.032258064516128782e-01\\
5.000000000000000000e-01 2.886002886002885792e-01\\
1.000000000000000000e+00 2.424242424242424310e-01\\
1.500000000000000000e+00 1.404494382022472010e-01\\
2.000000000000000000e+00 9.289363678588016815e-02\\
2.500000000000000000e+00 6.295247088448222006e-02\\
3.000000000000000000e+00 3.406574689150059648e-02\\
3.500000000000000000e+00 1.625884074465490728e-02\\
4.000000000000000000e+00 7.786645902277593810e-03\\
4.500000000000000000e+00 3.406052555390929552e-03\\
5.000000000000000000e+00 1.287307788855776396e-03\\
};
\addlegendentry{EED-4\--BP} %

\end{axis}
\end{tikzpicture}%
    \caption{ Performance of different decoders for the extended Golay code $\mathcal{C}_{\mathrm{G}}(24,12)$. All BP decodings are based on an overcomplete PCM following the construction proposed in \cite{MBBP2}. The auxiliary paths of EED use endomorphisms with weight over permutation ${\Delta(\bm{T})=8}$ and rank deficiency $s=8$.}
    \label{fig:golay_res}
    \begin{tikzpicture}[scale=0.92,spy using outlines={rectangle, magnification=2}]

\begin{axis}[%
width=.85\columnwidth,
height=4.3cm,
at={(0.758in,0.645in)},
scale only axis,
xmin=1,
xmax=5,
xlabel style={font=\color{white!15!black}},
xlabel={$E_{\mathrm{b}}/N_0$ ($\si{dB}$)},
ymode=log,
ymin=1e-04,
ymax=1,
yminorticks=true,
ylabel style={font=\color{white!15!black}},
ylabel={FER},
axis background/.style={fill=white},
xmajorgrids,
ymajorgrids,
legend style={at={(0.03,0.03)}, anchor=south west, legend cell align=left, align=left, draw=white!15!black,font=\scriptsize}
]

\addplot [color=black, line width=0.9pt]
  table[row sep=crcr]{%
 1.0  1.547988e-01\\
 1.5  1.027485e-01\\
 2.0  5.831754e-02\\
 2.5  2.998501e-02\\
 3.0  1.409741e-02\\
 3.5  5.323042e-03\\
 4.0  1.826768e-03\\
 4.5  5.168486e-04\\
 5.0  1.366593e-04\\
};
\addlegendentry{ML (OSD-3)}%

\addplot [color=KITblue, line width=1.1pt,mark=triangle]
  table[row sep=crcr]{%
1.000000000000000000e+00 2.379327157484154021e-01\\
1.500000000000000000e+00 1.742053789731051461e-01\\
2.000000000000000000e+00 1.268576351496877030e-01\\
2.500000000000000000e+00 7.298764291488624156e-02\\
3.000000000000000000e+00 4.024375527935320634e-02\\
3.500000000000000000e+00 1.973258851216799781e-02\\
4.000000000000000000e+00 8.923494160360439034e-03\\
4.500000000000000000e+00 3.497849470325651612e-03\\
5.000000000000000000e+00 1.137538093140910787e-03\\
6.000000000000000000e+00 7.299999999999999903e-05\\
};
\label{plot:polar_sc}
\addlegendentry{SC = AED-SC (LTA)};

\addplot [color=KITgreen, line width=1.1pt, mark=asterisk]
  table[row sep=crcr]{%
1.000000000000000000e+00 1.950268161872257544e-01\\
1.500000000000000000e+00 1.222493887530562290e-01\\
2.000000000000000000e+00 8.615119534783545474e-02\\
2.500000000000000000e+00 4.619471070562420484e-02\\
3.000000000000000000e+00 2.413418607457463658e-02\\
3.500000000000000000e+00 1.034473840742752143e-02\\
4.000000000000000000e+00 4.374261843313941196e-03\\
4.500000000000000000e+00 1.480571204370646245e-03\\
5.000000000000000000e+00 4.426218261248680342e-04\\
6.000000000000000000e+00 1.900000000000000105e-05\\
};
\label{plot:polar_autsc}
\addlegendentry{EED-$4$-SC};

\end{axis}
\end{tikzpicture}%
    \caption{ Performance of different decoders for the 5G polar code $\mathcal{C}_{\mathrm{P}}(32,16)$. Note that no CRC is used for the polar code. The auxiliary paths of EED use endomorphisms with weight over permutation ${\Delta(\bm{T})=16}$ and rank deficiency $s=8$.}
    \label{fig:polar_res}
\end{figure}

Due to their relation to Steiner systems, Golay codes possess a large automorphism group known as the Mathieu group \cite{MacWilliamsSloane}. Furthermore, in \cite{algebraicpropertiesofpolarcodes}, the authors show that the automorphism group of polar codes contains the lower triangular affine group (LTA). Depending on the choice of frozen positions, a possibly larger automorphism group is known \cite{Aut_PolarCodes_Geiselhart}.\footnote{The largest automorphism group is obtained if the frozen indices are chosen such that an RM code is obtained \cite{MacWilliamsSloane}.}

By sampling pairs of automorphisms from the Mathieu group $\mathrm{M}_{24}$ for the Golay code $\mathcal{C}_{\mathrm{G}}(24,12)$ and from the LTA for the polar code $\mathcal{C}_{\mathrm{P}}(32,16)$\footnote{Note that $\mathcal{C}_{\mathrm{P}}(32,16)$ is an RM code. Hence, its automorphism group is the larger general affine group. Yet, we constrain ourselves to elements from LTA to showcase a scenario in which AED yields no gains over SC decoding.}, respectively, and superposing their transformation matrices, 
we obtain proper endomorphisms with varying weight over permutation and rank deficiency.
Among those, we arbitrarily choose 3 endomorphisms with rank deficiency $8$ and weight over permutations $8$ for the Golay and $16$ for the polar code, respectively,  to obtain a small weight over permutation and moderate rank deficiency.

All BP decoders perform $32$ iterations of the sum-product algorithm
with a flooding schedule on the Tanner graph defined by an overcomplete PCM constructed as in \cite{MBBP2}. For SC decoding, we use the implementation from \cite{6065237}.

Figures~\ref{fig:golay_res} and \ref{fig:polar_res} depict the FER over $E_{\mathrm{b}}/N_0$ of EED-$4$-BP and EED-$4$-SC compared to stand-alone BP and SC decoding, respectively. Furthermore, we demonstrate the performance of respective ML decoding. In Fig.~\ref{fig:golay_res}, we additionally show the performance of MBBP decoding according to \cite{MBBP2} using an equal number of paths and decoding iterations.

The performance of ${\text{EED-}4\text{-BP}}$ and ${\text{EED-}4\text{-SC}}$ improves over the full simulated SNR regime compared to BP decoding and SC decoding, respectively.
In particular, at an FER of $10^{-2}$, we observe a gain of $0.8\,\si{dB}$ and $0.4\,\si{dB}$, respectively,
demonstrating that both BP and SC decoding can be significantly improved by EED.

At low SNR regime, EED-$4$-BP possesses a lower FER compared to MBBP. At an FER of $10^{-2}$, we observe a gain of $0.2\,\si{dB}$. However, this is at the expense of increased complexity due to the post-processing step.
With increasing SNR, the gain diminishes until both decoders yield a FER of $1.2 \cdot 10^{-3}$.

It is important to highlight that SC decoding can not be improved using an AED based on automorphisms from the LTA \cite[Corollarly 2.1]{AED_RMcodes}. Interestingly, using endomorphisms constructed based on these automorphisms can improve SC decoding, demonstrating the increased flexibility of EED.

\section{Conclusion}
In this work, we considered endomorphisms of linear block codes proving different structural properties enabling 
 an explicit construction of their transformation matrices. 
Furthermore,
we derived a one-to-one mapping between the set of transformation matrices of endomorphisms and a larger linear block code
enabling the usage of well-known algorithms to find endomorphisms for different applications, e.g., for decoding.
Then, we proposed a possible application of proper endomorphisms in ensemble decoding introducing EED showing improved performance both compared to BP and SC decoding for some short codes. In particular, we highlighted a scenario in which knowledge of the automorphism can only be exploited in EED but not in AED.
To conclude, endomorphisms of codes promise to provide additional structural insights into the code and to improve decoding, especially in the short-block length regime.

\newpage

\appendix
\textbf{Example} To illustrate the introduced definitions and the post-processing step of EED we provide a detailed example in which we consider a $\mathcal{C}_{\mathrm{H}}(7,4)$ Hamming code defined by the PCM
\[\bm{H}=\left(\begin{smallmatrix}
  1 & 0 & 1 & 1 & 1 & 0 & 0 \\
  1 & 1 & 0 & 1 & 0 & 1 & 0 \\
  0 & 1 & 1 & 1 & 0 & 0 & 1 \\
\end{smallmatrix}\right).
\]
First, we perform Gaussian elimination on the columns of the PCM $\bm{H}$ determining a CCM and its inverse
\[
\bm{A}=\left(
\begin{smallmatrix}
  0 & 1 & 1 & 1 & 0 & 1 & 1 \\
  1 & 1 & 0 & 1 & 1 & 1 & 0 \\
  0 & 0 & 0 & 1 & 0 & 0 & 0 \\
  1 & 1 & 1 & 0 & 1 & 1 & 1 \\
  0 & 0 & 0 & 0 & 1 & 0 & 0 \\
  0 & 0 & 0 & 0 & 0 & 1 & 0 \\
  0 & 0 & 0 & 0 & 0 & 0 & 1 \\
\end{smallmatrix}\right), \qquad \bm{A}^{-1}=
\left(\begin{smallmatrix}
   1 & 0 & 1 & 1 & 1 & 0 & 0 \\
  1 & 1 & 0 & 1 & 0 & 1 & 0 \\
  0 & 1 & 1 & 1 & 0 & 0 & 1 \\
  0 & 0 & 1 & 0 & 0 & 0 & 0 \\
  0 & 0 & 0 & 0 & 1 & 0 & 0 \\
  0 & 0 & 0 & 0 & 0 & 1 & 0 \\
  0 & 0 & 0 & 0 & 0 & 0 & 1 \\
\end{smallmatrix}\right),
\]
such that $\bm{HA}=\left[\bm{I}_{3\times3}\,\bm{0}_{3\times 4} \right]$.
Next, we arbitrarily sample a matrix from $\mathcal{Z}$
\[
\bm{Z}_T=\left(\begin{smallmatrix}
 0 & 0 & 1 & 0 & 0 & 0 & 0 \\
  0 & 1 & 0 & 0 & 0 & 0 & 0 \\
  1 & 0 & 1 & 0 & 0 & 0 & 0 \\
  0 & 0 & 0 & 0 & 0 & 0 & 0 \\
  0 & 0 & 0 & 1 & 0 & 1 & 0 \\
  0 & 0 & 0 & 0 & 0 & 0 & 1 \\
  0 & 0 & 0 & 0 & 1 & 0 & 0 \\
\end{smallmatrix}\right)\in \mathcal{Z},
\]
with $\bm{D}_T=\bm{0}_{4\times3}$ and
\[
\bm{C}_T=\left(
\begin{smallmatrix}
 0 & 0 & 1 \\
  0 & 1 & 0 \\
  1 & 0 & 1 \\
\end{smallmatrix}\right), \qquad \bm{E}_T=
\left(\begin{smallmatrix}
  0 & 0 & 0 & 0 \\
  1 & 0 & 1 & 0 \\
  0 & 0 & 0 & 1 \\
  0 & 1 & 0 & 0 \\
\end{smallmatrix}\right).
\]
Note that $\mathrm{Rank}(\bm{E})=3$.
According to Theorem~\ref{theorem:structure_of_end}, we obtain the transformation matrix $\bm{T}$ of a (proper) endomorphism $\tau$ of the Hamming code
\[
\bm{T}=\bm{A}\bm{Z}_T\bm{A}^{-1}=\left(\begin{smallmatrix}
 0 & 0 & 0 & 1 & 0 & 1 & 0 \\
  1 & 0 & 0 & 0 & 0 & 0 & 0 \\
  0 & 0 & 0 & 0 & 0 & 0 & 0 \\
  0 & 1 & 0 & 0 & 0 & 0 & 1 \\
  0 & 0 & 1 & 0 & 0 & 1 & 0 \\
  0 & 0 & 0 & 0 & 0 & 0 & 1 \\
  0 & 0 & 0 & 0 & 1 & 0 & 0 \\
\end{smallmatrix}\right).
\]
We stack all $2^4$ codewords of $\mathcal{C}_{\mathrm{H}}$ as columns to a matrix 
\[\bm{W}=
\left(\begin{smallmatrix}
  0 & 0 & 0 & 0 & 0 & 0 & 0 & 0 & 1 & 1 & 1 & 1 & 1 & 1 & 1 & 1 \\
  0 & 0 & 0 & 0 & 1 & 1 & 1 & 1 & 0 & 0 & 0 & 0 & 1 & 1 & 1 & 1 \\
  0 & 1 & 0 & 1 & 1 & 0 & 0 & 1 & 1 & 0 & 1 & 0 & 1 & 0 & 0 & 1 \\
  0 & 1 & 1 & 0 & 1 & 0 & 1 & 0 & 1 & 0 & 0 & 1 & 0 & 1 & 0 & 1 \\
  0 & 0 & 1 & 1 & 0 & 0 & 1 & 1 & 1 & 1 & 0 & 0 & 0 & 0 & 1 & 1 \\
  0 & 1 & 1 & 0 & 0 & 1 & 0 & 1 & 0 & 1 & 1 & 0 & 0 & 1 & 0 & 1 \\
  0 & 0 & 1 & 1 & 1 & 1 & 0 & 0 & 0 & 0 & 1 & 1 & 0 & 0 & 1 & 1 \\
\end{smallmatrix}\right)
\]
and multiply it from the left with $\bm{T}$ yielding
\[
\bm{T}\bm{W}=\left(\begin{smallmatrix}
  0 & 0 & 0 & 0 & 1 & 1 & 1 & 1 & 1 & 1 & 1 & 1 & 0 & 0 & 0 & 0 \\
  0 & 0 & 0 & 0 & 0 & 0 & 0 & 0 & 1 & 1 & 1 & 1 & 1 & 1 & 1 & 1 \\
  0 & 0 & 0 & 0 & 0 & 0 & 0 & 0 & 0 & 0 & 0 & 0 & 0 & 0 & 0 & 0 \\
  0 & 0 & 1 & 1 & 0 & 0 & 1 & 1 & 0 & 0 & 1 & 1 & 1 & 1 & 0 & 0 \\
  0 & 0 & 1 & 1 & 1 & 1 & 0 & 0 & 1 & 1 & 0 & 0 & 1 & 1 & 0 & 0 \\
  0 & 0 & 1 & 1 & 1 & 1 & 0 & 0 & 0 & 0 & 1 & 1 & 0 & 0 & 1 & 1 \\
  0 & 0 & 1 & 1 & 0 & 0 & 1 & 1 & 1 & 1 & 0 & 0 & 0 & 0 & 1 & 1 \\
\end{smallmatrix}\right).
\]
We observe, that always two columns of $\bm{TW}$ are equal, i.e., two codewords from the Hamming code $\mathcal{C}_{\mathrm{H}}$ are mapped onto the same codeword.
Note that we conveniently ordered $\bm{W}$ such that the pair of neighbouring columns $(\bm{W}_{:,(2i-1)},\bm{W}_{:,2i})$ with $i\in\{1,\ldots,8\}$ are mapped onto the same codeword. Hence, those pairs form the cosets $[\bm{x}]_\tau$ with cardinality $2^1=2$, i.e., the rank deficiency is $s=1$. The rank deficiency can also be calculated according to $s=k-{\mathrm{Rank}(\bm{E}_T)}$.

We continue by determining step by step the reconstruction matrix $\bm{R}$ as well as the basis of $\mathrm{Im}(\tau)$.
To this end, we follow the proof of Theorem \ref{theorem:reconstruction_end}. We start by diagonalizing $\bm{E}$ in two steps. First, we perform Gaussian elimination on the columns represented by right-operating $\bm{G}_{\mathrm{r}}$ such that $\bm{EG}_{\mathrm{r}}\in \mathcal{M}$ and, second, 
we use row-wise Gaussian elimination represented by left-operating $\bm{G}_{\mathrm{l}}$ such that
$ {\bm{G}_{\mathrm{l}}\bm{EG}_{\mathrm{r}}=\mathrm{diag}\left(\mathbbm{1}_{\left\{(\bm{E}_{T}\bm{G}_{\mathrm{r}})_{11} \neq 0 \right\}},\ldots, \mathbbm{1}_{\left\{(\bm{E}_{T}\bm{G}_{\mathrm{r}})_{44}\neq 0\right\}}\right)}$
 yielding
\[
\bm{G}_{\mathrm{r}}=\left(\begin{smallmatrix}
    1 & 1 & 0 & 0 \\
  0 & 0 & 0 & 1 \\
  1 & 0 & 0 & 0 \\
  0 & 0 & 1 & 0 \\ 
\end{smallmatrix}\right)  ,\qquad \bm{G}_{\mathrm{l}}=\left(\begin{smallmatrix}
  1 & 0 & 0 & 0 \\
  0 & 1 & 0 & 0 \\
  0 & 0 & 1 & 0 \\
  0 & 0 & 0 & 1 \\
\end{smallmatrix}\right).
\]
It can be checked that
\begin{align*} 
\bm{E}_T\bm{G}_{\mathrm{r}}&=\left(\begin{smallmatrix}
  0 & 0 & 0 & 0 \\
  0 & 1 & 0 & 0 \\
  0 & 0 & 1 & 0 \\
  0 & 0 & 0 & 1 \\
\end{smallmatrix}\right)\in\mathcal{M},\\
\bm{\Lambda}&=\mathrm{diag}\left(\mathbbm{1}_{\left\{(\bm{E}_{T}\bm{G}_{\mathrm{r}})_{11}\neq 0\right\}},\ldots, \mathbbm{1}_{\left\{(\bm{E}_{T}\bm{G}_{\mathrm{r}})_{44}\neq 0\right\}}\right)\\
&=\left(\begin{smallmatrix}
  0 & 0 & 0 & 0 \\
  0 & 1 & 0 & 0 \\
  0 & 0 & 1 & 0 \\
  0 & 0 & 0 & 1 \\
\end{smallmatrix}\right).
\end{align*}
Hence, we obtain $\mathcal{J}=\{1\}$. Now, using
$$\bm{\varepsilon}_i=\begin{pmatrix} 
\bm{0}_{3\times1}\\
 \bm{G}_{\mathrm{r}}\bm{e}_i
\end{pmatrix}$$ we determine the sets
\begin{align*}
\left\{\bm{A}\bm{\varepsilon}_j\right\}_{j\in \mathcal{J}}&=\left\{\left(\begin{smallmatrix}
0\\
0\\
1\\
1\\
0\\
1\\
0\\
\end{smallmatrix}\right)\right\},\\
\left\{\bm{A}\bm{\varepsilon}_i\right\}_{i=1}^4&=\left\{\bm{A}\bm{\varepsilon}_j\right\}_{j\in \mathcal{J}}\cup \left\{\bm{A}\bm{\varepsilon}_j\right\}_{j\in \mathcal{J}^{\mathsf{c}}} \\
&=\left\{
\left(\begin{smallmatrix}
0\\
0\\
1\\
1\\
0\\
1\\
0\\
\end{smallmatrix}\right),
\left(
\begin{smallmatrix}
1\\
1\\
1\\
0\\
0\\
0\\
0\\
\end{smallmatrix}\right),
\left(\begin{smallmatrix}
1\\
0\\
0\\
1\\
0\\
0\\
1\\
\end{smallmatrix}\right),
\left(\begin{smallmatrix}
0\\
1\\
0\\
1\\
1\\
0\\
0\\
\end{smallmatrix}\right)\right\},
\end{align*}
consisting of linearly independent vectors, respectively. First, note that all elements of $\left\{\bm{A}\bm{\varepsilon}_i\right\}_{i=1}^4$ are codewords. Hence, $\left\{\bm{A}\bm{\varepsilon}_i\right\}_{i=1}^4$ forms a basis of $\mathcal{C}_{\mathrm{H}}(7,4)$.

Furthermore, note that 
\[
\bm{T}\left(\begin{smallmatrix}
0\\
0\\
1\\
1\\
0\\
1\\
0\\
\end{smallmatrix}\right)=\bm{0}_{7\times 1}.
\]
Hence, together with the rank deficiency $s=1$, the set $\left\{\bm{A}\bm{\varepsilon}_j\right\}_{j\in \mathcal{J}}$ indeed forms a basis of $\mathrm{Null}(\tau)$.
Finally, we calculate the reconstruction matrix 
\begin{align*}
\bm{R}&=\bm{A}\left(\begin{smallmatrix}
    \bm{0}_{3\times 3}&\bm{0}_{3\times 4}\\
    \bm{0}_{4\times 3}&\bm{G}_{\mathrm{r}}\bm{G}_{\mathrm{l}}\\
\end{smallmatrix}\right)\bm{A}^{-1}\\
&=\left(\begin{smallmatrix}
  0 & 0 & 0 & 0 & 1 & 1 & 0 \\
  0 & 0 & 0 & 0 & 1 & 0 & 1 \\
  0 & 0 & 1 & 0 & 1 & 0 & 0 \\
  0 & 0 & 1 & 0 & 0 & 1 & 1 \\
  0 & 0 & 0 & 0 & 0 & 0 & 1 \\
  0 & 0 & 1 & 0 & 0 & 0 & 0 \\
  0 & 0 & 0 & 0 & 0 & 1 & 0 \\
\end{smallmatrix}\right)
\end{align*}
assuming that $\bm{C}_R$ and $\bm{D}_R$ are all-zero matrices. Note that arbitrary $\bm{C}_R$ and $\bm{D}_R$ yield a valid reconstruction matrix.

Last, to illustrate the mapping with $\bm{T}$ and the post-processing step according to Theorem~\ref{theorem:reconstruction_end}, we consider an arbitrary codeword, e.g., ${\bm{x}=\left(\begin{smallmatrix}
  1 & 1 & 1 & 1 & 1 & 1 & 1
\end{smallmatrix}\right)^{\mathsf{T}}}$.
We obtain
\[
\bm{x}_\tau=\bm{Tx}=\left(\begin{smallmatrix}
0\\
1\\
0\\
0\\
0\\
1\\
1\\
\end{smallmatrix}\right)\in \mathrm{Im}(\tau)\subseteq\mathcal{C}_{\mathrm{H}}.
\]
Next, we determine the coset consisting of codewords mapped onto $\bm{x}_\tau$, i.e.,
\begin{align*}
\left\{\bm{R}\bm{x}_\tau+\alpha \left(\begin{smallmatrix}
0\\
0\\
1\\
1\\
0\\
1\\
0\\
\end{smallmatrix}\right): \alpha\in \{0,1\}\right\}&=\left\{\left(\begin{smallmatrix}
1\\
1\\
0\\
0\\
1\\
0\\
1\\
\end{smallmatrix}\right),\left(\begin{smallmatrix}
1\\
1\\
0\\
0\\
1\\
0\\
1\\
\end{smallmatrix}\right)\!\!+\!\! \left(\begin{smallmatrix}
0\\
0\\
1\\
1\\
0\\
1\\
0\\
\end{smallmatrix}\right)\right\}\\
&=\left\{\left(\begin{smallmatrix}
1\\
1\\
0\\
0\\
1\\
0\\
1\\
\end{smallmatrix}\right),
\left(\begin{smallmatrix}
1\\
1\\
1\\
1\\
1\\
1\\
1\\
\end{smallmatrix}\right)\right\},
\end{align*}
which contains the codeword $\bm{x}$.
\IEEEtriggeratref{11}

\newpage

\end{document}